\newcommand{\removelatexerror}{\let\@latex@error\@gobble}
\def\@begintheorem#1#2{\par\bgroup{\scshape #1\ #2. }\it\ignorespaces}
\def\@opargbegintheorem#1#2#3{\par\bgroup%
   {\scshape #1\ #2\ ({\upshape #3}). }\it\ignorespaces}
\def\@endtheorem{\egroup}
\newtheorem{pavikl}{\textbf{Lemma}}
\newcommand{\Rmnum}[1]{\expandafter\@slowromancap\romannumeral #1@}
\begin{document}
\title{Energy- and Spectral- Efficiency Tradeoff for D2D-Multicasts in Underlay Cellular Networks} 

\author{\IEEEauthorblockN{Ajay Bhardwaj and Samar Agnihotri}

\IEEEauthorblockA{School of Computing and Electrical Engineering, Indian Institute of Technology Mandi, HP - 175$\,$005, India}

Email: ajay\_singh@students.iitmandi.ac.in, samar@iitmandi.ac.in%
}
\maketitle
\begin{abstract}
Underlay in-band device-to-device (D2D) multicast communication, where the same content is disseminated via direct links in a group, 
has the potential to improve the spectral and energy efficiencies of cellular networks. However, most of the existing approaches for this 
problem only address either spectral efficiency (SE) or energy efficiency (EE). We study the trade-off between SE and EE in a single cell D2D integrated 
cellular network, where multiple D2D multicast groups (MGs) may share the uplink channel with multiple cellular users (CUs). We explore SE-EE trade-off for 
this problem by formulating the EE maximization problem with constraint on SE and maximum available transmission power. A power allocation algorithm is 
proposed to solve this problem and its efficacy is demonstrated via extensive numerical simulations. The trade-off between SE and EE as a function of 
density of D2D MGs, and maximum transmission power of a MG is characterized. 
\end{abstract}
\begin{IEEEkeywords}
Multicasting, Device-to-Device communication, energy-and-spectral efficiency trade-off, LTE-A networks.
\end{IEEEkeywords}
\section{Introduction}
\label{Sec:1}
Supporting ever increasing number of mobile users with data-hungry applications, running on battery-limited devices, is posing a daunting challenge to 
telecommunications community. Underlay device-to-device (D2D) communication, which allows physically proximate mobile users to directly communicate with 
each other by reusing the spectrum and without going through the base station, holds promise to help us tackle this challenge \cite{Dfeng}. In a cellular 
network, underlay D2D communication offers opportunities for spectrum reuse and spatial diversity which may lead to enhanced coverage, higher throughput, 
and robust communication in the network \cite{liu2015device}. Further, for applications such as weather forecasting, live streaming, or file distribution, 
which may require the same chunk of data distributed to geographically proximate group of users, D2D multicasting may provide better utilization of network 
resources compared to D2D unicast or Base Station (BS) based multicast. However, extensive deployment of underlay D2D multicast in a network may cause 
severe co-channel interference due to spectrum reuse and rapid battery depletion of the multicasting D2D nodes due to higher transmit power to mitigate co-
channel interference and data-forwarding.

In cellular networks, two metrics namely spectrum efficiency (SE) and energy efficiency (EE) characterize how efficiently the spectrum and energy 
resources, respectively, are used. However, often conflicting nature of these metrics may not allow for simultaneous maximization of both in a network. 
There exists an extensive body of work that explores SE-EE trade-off in cellular networks\cite{tang2014resource,rao2016analytical} and some work that 
explore the nature of this trade-off for D2D communication \cite{gao2017energy,wei2016energy}. However, currently no systematic study of the nature of such 
trade-off for multiple D2D multicasts in underlay cellular networks exists. To the best of our knowledge, this letter provides the first such study.

Using stochastic geometry, we formulate a resource allocation problem that maximizes the EE of multiple D2D-multicasts in underlay cellular networks with 
constraints on SE and maximum transmission power. The formulated problem is non-convex, and is solved using the proposed heuristic gradient power 
allocation algorithm. We establish the trade-off between EE and SE with various network parameters such as density of D2D multicast transmitters, and 
maximum transmission power of MGs through numerical simulations.

\noindent\textit{Organization:} Section \ref{sec:2} introduces the system model. Section \ref{sec:3} and \ref{sec:4} introduce the problem formulation and the optimal power allocation algorithms, respectively. Performance evaluation is in Section \ref{sec:5}. Finally, Section \ref{sec:6} concludes the paper.
\section{System Model}
\label{sec:2}
A D2D-integrated cellular network is considered, where multiple D2D-multicast groups (MGs) may share the uplink channel with multiple cellular users (CUs).
Let $\mathcal{K} = \{1,2,\ldots,K\}$ denote the total number of orthogonal channels 
that can be shared by CUs and D2D MGs. Sharing of uplink channels is assumed instead of downlink because of 
asymmetric uplink and downlink traffic loads \cite{onireti2012energy}, and also as the eNB (evolved Node Base station) can handle interference 
effectively. The spatial distribution of CUs and MGs on the $k^\textrm{th}$ channel is modeled as  homogeneous 
Poisson Point Process $\Pi_{c,k}$ with density $\lambda_c^k$, and $\Pi_{g,k}$
with density $\lambda_g^k$, respectively, in the Euclidean plane $\mathbb{R}^2$. The proposed system model is an abstraction of a system where a single cell is divided into small cells, and multiple CUs may share a single channel. 

Let $ \vert \mathcal{U}_g \vert$ be the number of receivers in the $g^\textrm{th}$ MG ($ \vert \mathcal{U}_g \vert = 1$ corresponds to unicast communication.) We consider the 
variable number of receivers in every MG and assume that they always have data demands. The transmission powers of CU and D2D MG transmitter (D2D-Tx) on the $k^\textrm{th}$ channel are denoted by $p_{c,k}$ and $p_{g,k}$, respectively. In addition, total transmission power of CUs and D2D MGs is $P_C$ and $P_G$, i.e.
\\
\begin{subequations}
\setlength{\tabcolsep}{0pt}
\\
\noindent\begin{minipage}{0.35\hsize}
\begin{equation}\label{eq:1a}
    \sum\nolimits_{k=1}^{\vert \mathcal{K} \vert} p_{c,k} =  P_C,  \notag
    \addtocounter{equation}{1} 
\end{equation}
\end{minipage}(\ref{eq:1a}),
\begin{minipage}{0.35\hsize}
\begin{equation}\label{eq:1b}
  \sum\nolimits_{k=1}^{\vert \mathcal{K} \vert} p_{g,k} \leq  P_G   \notag
    \addtocounter{equation}{1} 
\end{equation}
\end{minipage}(\ref{eq:1b})
\\
\end{subequations}

For analysis, a reference receiver at the origin of cell (eNB for cellular and a typical D2D receiver for D2D communication) 
is considered. The radio propagation channel gain between the $i^\textrm{th}$ transmitter and the $j^\textrm{th}$ receiver, denoted as 
$h_{i,j}$, is assumed to be Rayleigh faded, and independently and identically exponentially distributed with unit mean, i.e. $h_{i,j}\sim \textrm{exp}(1)$. Therefore, the received power at the $j^\textrm{th}$ 
receiver is $p_j= p_i h_{i,j}d_{i,j}^{-\alpha}$, whereas $d_{i,j}$ denotes the distance between the $i^\textrm{th}$ and the $j^\textrm{th}$ node, and $\alpha$ is the path-loss exponent.

As we are considering the scenarios where a channel is shared by multiple CUs and multiple MGs, thus, the $r^\textrm{th} \left(r \in \mathcal{U}_g\right)$ D2D-Rx  experiences interference from co-channel CUs and other D2D MG-Tx. Therefore, the signal-to-interference-plus-noise-ratio (SINR) at the $r^\textrm{th} \left(r \in \mathcal{U}_g\right)$ D2D receiver on the $k^\textrm{th}$ channel is 
\begin{equation}
\label{eq:2}
\gamma_{g,r}^k =  \frac{p_{g,k} h_{g,r,k} d_{g,r}^{-\alpha}}
{\underset{c \in \Pi_{c,k}}{\sum} p_{c,k} h_{c,r,k} d_{c,r}^{-\alpha} + \underset{g' \in \Pi_{g,k}}{\sum}p_{g',k} h_{g',r,k} d_{g',r}^{-\alpha} +  N_0},
\end{equation}
As the system model is interference limited, therefore, thermal noise $N_0$ can be omitted, and hence \eqref{eq:2} becomes
\begin{equation}
\label{eq:3}
\gamma_{g,r}^k =  \frac{h_{g,r,k} d_{g,r}^{-\alpha}}{I_{c,r,k} + I_{g',g,k}},
\end{equation}
where $I_{c,r,k}  = \mathop{\sum}_{c \in \Pi_{c,k}} \frac{p_{c,k}}{p_{g,k}} h_{c,r,k} d_{j,r,k}^{-\alpha}$ and $I_{g',g,k} = \mathop{\sum}_{g' \in \Pi_{g,k}} \frac{p_{g,k}}{p_{g',k}}h_{g',r,k} d_{g',r,k}^{-\alpha}$.
In a MG, transmission rate is decided by channel conditions of the worst user \cite{meshgi2015joint}, so, SIR, and the corresponding date rate, respectively, are
 \begin{equation}
\label{eq:4}
\gamma_{g}^k =  \underset{r \in \mathcal{U}_g}{\min} \left(\frac{h_{g,r,k} d_{g,r}^{-\alpha}}{I_{c,r,k} + I_{g',g,k}}\right), ~~R_g^k = \log_2\left(1+ \gamma_{g}^k\right)
\end{equation}
An outage event for a MG g occurs if the aggregate rate of the $g^\textrm{th}$ group falls below its target rate, $R_g^{\textrm{th}}$. 
Therefore, the outage probability of the $g^\textrm{th}$ MG is given by the following lemma.
\begin{pavikl}
\label{Lemma:1}
The outage probability of a D2D MG communicating on the $k^\textrm{th}$ shared channel is 
\begin{equation}
\label{eq:5}
\textrm{Pr} \left(R_g^k < R_g^\textrm{th} \right) =  1 - \textrm{exp}{\Bigg \lbrace -\chi_{g,k} \left[ \lambda_c^k \left(\frac{p_{c,k}}{p_{g,k}}\right)^{\frac{2}{\alpha}}  + \lambda_g^k \right]\Bigg\rbrace},
\end{equation}
where $\chi_{g,k} = \pi  d_{g,r}^2 \Gamma \left( 1  + \frac{2}{\alpha}\right) \Gamma \left(1 - \frac{2}{\alpha}\right) \left(2^{\left(R_g^\textrm{th}/\vert \mathcal{U}_g \vert\right)} - 1 \right)^{{2}/{\alpha}}$,  $\Gamma(x) = \int_0^{\infty} t^{x-1} e^{-t} dt$ denotes complete gamma function.  
\end{pavikl}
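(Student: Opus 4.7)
The plan is to rewrite the outage event as an SIR-threshold event, use the $\exp(1)$ law of the desired-link fading coefficient to cast the outage probability as a Laplace transform of the aggregate interference, factor that Laplace transform using the independence of the two interfering PPPs, and finally evaluate each factor with the probability generating functional (PGFL) of a homogeneous PPP together with a standard Gamma-function integral.

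First I would set $T := 2^{R_g^{\textrm{th}}/|\mathcal{U}_g|} - 1$. Interpreting $d_{g,r}$ as the distance from the D2D transmitter of group $g$ to its representative (worst) receiver $r\in\mathcal{U}_g$, the event $\{R_g^k < R_g^{\textrm{th}}\}$ becomes $\{\gamma_{g,r}^k < T\}$, equivalently $\{h_{g,r,k} < s\,(I_{c,r,k}+I_{g',g,k})\}$ with $s := T\,d_{g,r}^{\alpha}$. Conditioning on the interference and using $h_{g,r,k}\sim\exp(1)$ yields $\Pr(\text{outage}) = 1 - \mathbb{E}[e^{-s(I_{c,r,k}+I_{g',g,k})}]$, and by independence of $\Pi_{c,k}$ and $\Pi_{g,k}$ this factors as $1 - \mathcal{L}_{I_c}(s)\,\mathcal{L}_{I_g}(s)$.

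Next I would evaluate each factor from the PPP PGFL: for a homogeneous PPP of density $\lambda$ in $\mathbb{R}^2$ with i.i.d.\ $\exp(1)$ fading marks and a common power-ratio multiplier $\rho$, $\mathcal{L}(s) = \exp\!\left(-\lambda\int_{\mathbb{R}^2}\frac{s\rho|x|^{-\alpha}}{1+s\rho|x|^{-\alpha}}\,dx\right)$. Passing to polar coordinates, rescaling $u = r/(s\rho)^{1/\alpha}$, and then substituting $v = u^\alpha$ reduces the integral to $(2\pi/\alpha)(s\rho)^{2/\alpha}\int_0^\infty v^{2/\alpha - 1}/(1+v)\,dv$. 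The Euler reflection identity $\int_0^\infty v^{p-1}/(1+v)\,dv = \Gamma(p)\Gamma(1-p)$ with $p=2/\alpha$, together with $(2/\alpha)\,\Gamma(2/\alpha) = \Gamma(1+2/\alpha)$, produces $\pi(s\rho)^{2/\alpha}\Gamma(1+2/\alpha)\Gamma(1-2/\alpha)$.

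Finally, applying the above with $(\lambda,\rho)=(\lambda_c^k,\,p_{c,k}/p_{g,k})$ for $I_{c,r,k}$ and $(\lambda,\rho)=(\lambda_g^k,\,1)$ for $I_{g',g,k}$ (i.e., all co-channel MGs using $p_{g,k}$, consistent with the paper's single per-channel MG power) and then inserting $s^{2/\alpha} = T^{2/\alpha}\,d_{g,r}^{2}$, the two exponents combine into the single claimed exponent, with $\chi_{g,k}$ absorbing $\pi\,d_{g,r}^2\,\Gamma(1+2/\alpha)\Gamma(1-2/\alpha)\,T^{2/\alpha}$. The main obstacle is the PGFL integral, in particular recognizing the Beta/Euler form that collapses the $\Gamma$-product; a secondary subtlety is the reduction of the $\min_{r\in\mathcal{U}_g}$ in (4) to a single worst-case receiver at distance $d_{g,r}$, which the lemma implicitly performs, since the $|\mathcal{U}_g|$ receivers share interferers and so a clean product form across the group would not hold.
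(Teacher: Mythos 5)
Your proposal is correct and follows essentially the same route as the paper's own proof: recasting the outage event via the $\exp(1)$ law of the desired-link fading as a product of two interference Laplace transforms, then evaluating each with the PPP probability generating functional and the Euler reflection identity to obtain the $\Gamma\left(1+\frac{2}{\alpha}\right)\Gamma\left(1-\frac{2}{\alpha}\right)$ factor. You actually supply more detail than the paper (which simply invokes ``the definition of the Laplace transform'' for the PGFL step), and your remarks on the implicit reduction of the $\min$ over receivers to a single worst-case link and on the equal per-channel MG powers correctly identify the tacit assumptions the paper makes.
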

\begin{IEEEproof}
Please refer to Appendix \ref{lemma_1_proof}
\end{IEEEproof}
Similarly, the SIR of a CU transmitting on the $k^\textrm{th}$ channel is
\begin{equation}
\label{eq:6}
\gamma_c^k = \frac{p_{c,k} h_{c,b} d_{c,b}^{-\alpha}}{\underset{c \in \Pi_{c,k}}{\sum} p_{c,k} h_{c',b,k} d_{c',b}^{-\alpha} + \underset{g \in \Pi_{g,k}}{\sum}p_{g,k} h_{g,b,k} d_{g,b}^{-\alpha}}
\end{equation}
with the corresponding outage probability given by the following lemma.
\begin{pavikl}
The outage probability of a CU on the $k^\textrm{th}$ channel, can be expressed as 
\begin{equation}
\label{eq:7}
\textrm{Pr}\left(R_c^k < R_c^\textrm{th}\right) = 1- \textrm{exp}\Bigg\lbrace -\chi_{c,k}\left[ \lambda_c^k + \lambda_g^k \left(\frac{p_{g,k}}{p_{c,k}}\right)^{\frac{2}{\alpha}}\right]\Bigg\rbrace
\end{equation}
where $R_c^\textrm{th}$ denotes the date rate threshold of CU and $\chi_c^k = \pi d_{c,b}^2 \Gamma\left(1+  \frac{2}{\alpha}\right)\Gamma\left(1-  \frac{2}{\alpha}\right) \left(2^{\left(R_c^\textrm{th}\right)} - 1 \right)^{{2}/{\alpha}}$.
\end{pavikl}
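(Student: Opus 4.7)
My plan is to mirror the derivation of Lemma \ref{Lemma:1} almost verbatim, exploiting the fact that the CU link is a simple point-to-point channel to the eNB (no multicast minimum) while the interference structure is the PPP superposition already treated in the D2D case.

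First, I would translate the rate outage into an SIR outage by writing $\Pr(R_c^k<R_c^\textrm{th})=\Pr(\gamma_c^k<\tau_c)$ with $\tau_c:=2^{R_c^\textrm{th}}-1$. Substituting \eqref{eq:6} and factoring out $p_{c,k}h_{c,b}d_{c,b}^{-\alpha}$ in the numerator, the complementary event becomes
\begin{equation*}
\Pr\!\left(\gamma_c^k\ge\tau_c\right)=\Pr\!\left(h_{c,b}\ge \tfrac{\tau_c d_{c,b}^{\alpha}}{p_{c,k}}\bigl(I_c+I_g\bigr)\right),
\end{equation*}
where $I_c=\sum_{c'\in\Pi_{c,k}}p_{c,k}h_{c',b,k}d_{c',b}^{-\alpha}$ and $I_g=\sum_{g\in\Pi_{g,k}}p_{g,k}h_{g,b,k}d_{g,b}^{-\alpha}$. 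Using $h_{c,b}\sim\exp(1)$ and conditioning on the two interference fields gives $\Pr(\gamma_c^k\ge\tau_c)=\mathbb{E}\bigl[\exp(-s(I_c+I_g))\bigr]$ with $s:=\tau_c d_{c,b}^{\alpha}/p_{c,k}$, i.e.\ the joint Laplace transform evaluated at $s$.

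Next, since $\Pi_{c,k}$ and $\Pi_{g,k}$ are independent homogeneous PPPs with independent exponential fading, the Laplace transform factors as $\mathcal{L}_{I_c}(s)\mathcal{L}_{I_g}(s)$. For each factor I would apply the probability-generating-functional identity for a PPP of intensity $\lambda$ with constant transmit power $p$,
\begin{equation*}
\mathcal{L}_{I}(s)=\exp\!\left(-\lambda\int_{\mathbb{R}^2}\!\!\left(1-\mathbb{E}_h\!\left[e^{-s p h\lVert x\rVert^{-\alpha}}\right]\right)dx\right),
\end{equation*}
evaluate the inner expectation via the exponential MGF, and convert the remaining radial integral to a Beta-function integral that evaluates to $\pi(sp)^{2/\alpha}\Gamma(1+2/\alpha)\Gamma(1-2/\alpha)$. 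Substituting $s=\tau_c d_{c,b}^{\alpha}/p_{c,k}$ and $p\in\{p_{c,k},p_{g,k}\}$, the two exponents combine into $-\chi_{c,k}\bigl[\lambda_c^k+\lambda_g^k(p_{g,k}/p_{c,k})^{2/\alpha}\bigr]$, matching the stated $\chi_{c,k}$. Taking $1-(\cdot)$ then yields \eqref{eq:7}.

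The only real obstacle is the Laplace-transform integral, but this is precisely the same calculation carried out in the proof of Lemma \ref{Lemma:1} (in Appendix \ref{lemma_1_proof}); the present case is actually easier because the receiver is the eNB (a single point), so no $\min_{r\in\mathcal{U}_g}$ appears, the SIR threshold stays as $2^{R_c^\textrm{th}}-1$ without the $1/\lvert\mathcal{U}_g\rvert$ exponent, and the roles of the two PPP interferer sets are simply interchanged, producing the $(p_{g,k}/p_{c,k})^{2/\alpha}$ weighting on $\lambda_g^k$ rather than $(p_{c,k}/p_{g,k})^{2/\alpha}$ on $\lambda_c^k$. I would therefore state the proof by invoking the calculation of Lemma \ref{Lemma:1} and re-identifying the constants.
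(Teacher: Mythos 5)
Your proposal is correct and follows exactly the route the paper intends: the paper's own proof of this lemma is literally ``Similar to the proof of Lemma 1,'' and your derivation (rate outage $\to$ SIR threshold $2^{R_c^\textrm{th}}-1$, exponential fading $\to$ product of Laplace transforms over the two independent PPPs, PGFL integral $\to$ $\pi(sp)^{2/\alpha}\Gamma(1+2/\alpha)\Gamma(1-2/\alpha)$) is precisely that adaptation, with the correct identification of $s=\tau_c d_{c,b}^{\alpha}/p_{c,k}$ yielding the $(p_{g,k}/p_{c,k})^{2/\alpha}$ weighting on $\lambda_g^k$. No gaps.
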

\begin{IEEEproof}
Similar to the proof of Lemma 1.
\end{IEEEproof}
These lemmas allow us to infer that
\begin{itemize}
\item The outage probability of D2D MGs increases with increase in MG geographical spread, $d_{g,r}$, because channel fading becomes too severe with increasing distance. 
\item The outage probability of D2D MGs decreases with 
decrease in densities of CUs and D2D MGs, this is due to mitigation of co-channel interference.
\item The outage of D2D MGs increases with increase in $p_{c,k}$, because it creates more interference to D2D 
transmission. While, higher $p_{c,k}$ decrease the outage of CUs. 
\end{itemize}
\section{Optimal Resource Allocation for D2D- Multicasts in Underlay Cellular Communication}
\label{sec:3}
The average throughput, $\text{SE}_g^k$, of D2D mulicast communication on the $k^\textrm{th}$ channel, can be expressed as \cite{kwon2015energy} :
\begin{equation}
\label{eq:8}
\text{SE}_g^k = \vert u_g \vert \lambda_g^k \textrm{log}_2 \left(1 +  \gamma_{g}^k\right) \textrm{Pr} \left(R_g^k \geq R_g^\textrm{th} \right)
\end{equation}

The EE is defined as the ratio of average throughput to the power consumption \cite{gao2017energy}. As in \cite{kwon2015energy}, 
the power consumption of D2D MGs which are communicating on the $k^\textrm{th}$ channel is $\lambda_g^k p_{g,k}$. 
Therefore, the total energy efficiency of D2D multicast underlay network is
\begin{equation}
\label{eq:9}
\text{EE}_g =  \sum\nolimits_{k=1}^{\vert \mathcal{K} \vert} \text{EE}_g^k =  \sum\nolimits_{k=1}^{\vert \mathcal{K} \vert}  \text{SE}_g^k/(\lambda_g^k p_{g,k})
\end{equation}
To ensure the high data rate to both CUs and D2D users, thresholds on outage probabilities of both the D2D users and CUs are put as follows:
\begin{equation}
\label{eq:10}
\textrm{Pr} \left(R_c^k < R_c^\textrm{th}\right) \leq \Theta_c, \quad  \text{and} \quad \textrm{Pr} \left(R_g^k < R_g^\textrm{th} \right) \leq \Theta_g, 
\end{equation}
where $\Theta_c$ and $\Theta_g$ denote the outage thresholds for cellular and D2D MGs transmissions, 
respectively. The transmit power in the $k^\textrm{th}$ channel 
should be less than the allowed upper bound for that channel, denoted as $p_{g,k}^\textrm{up}$. Thus, we have
\begin{equation}
\label{eq:11}
0 \leq p_{g,k} \leq p_{g,k}^\textrm{up}
\end{equation}From  \eqref{eq:5}, \eqref{eq:7}, and \eqref{eq:10}, feasible power region is 
\begin{align}
&p_{g,k} \geq p_{c,k} \left(\frac{- \textrm{ln}\left(1 - \Theta_{g}\right)}{\lambda_c^k\chi_{g,k}} - \frac{\lambda_g^k}{\lambda_c^k}\right)^{-\frac{\alpha}{2}},\\
& p_{g,k} \leq p_{c,k}\left(\frac{- \textrm{ln} \left(1 - \Theta_{c}\right)}{\lambda_g^k \chi_{c,k}} - \frac{\lambda_c^k}{\lambda_g^k}\right)^{\frac{\alpha}{2}} \label{p_sup}
\end{align}
Let $p_{g,k}^{\text{low}} = p_{c,k} \left(\frac{- \textrm{ln}\left(1 - \Theta_{g}\right)}{\lambda_c^k\chi_{g,k}} - \frac{\lambda_g^k}{\lambda_c^k}\right)^{-\frac{\alpha}{2}} \textrm{and \-} p_{g,k}^{\text{high}} = p_{c,k} \left(\frac{- \textrm{ln} \left(1 - \Theta_{c} \right)}{\lambda_g^k \chi_{c,k}} - \frac{\lambda_c^k}{\lambda_g^k}\right)^{\frac{\alpha}{2}}$. Therefore, infimum $p_{g,k}^{\inf}$ = $\max\{ 0$, $p_{g,k}^{\text{low}}\}$ and supremum $p_{g,k}^{\sup} = \min\lbrace p_{g,k}^\textrm{up},p_{g,k}^{\text{high}}\rbrace$ denote the upper and lower bound, respectively. With these tranformations, the EE optimization problem with constraints on SE and maximum transmission power is formulated as:
\begin{align}
\mathbf{P:}~~~ &\underset{p_{g,k}} {\max~} EE_g =  \sum\nolimits_{k=1}^{\vert \mathcal{K} \vert} EE_g^k \label{eq:14}\\
&\textrm{s.t. \-} p_{g,k}^{\inf} \leq p_{g,k} \leq p_{g,k}^{\sup}, \text{~and~}\sum\nolimits_{k=1}^{\vert \mathcal{K} \vert} p_{g,k} \leq  P_G \nonumber 
\end{align}
The problem \textbf{P} is non-convex as we prove in Appendix \ref{proof_of_convex}. From second constraint, one may infer that, as the CUs are primary users, so to maintain their priority over MG users, they are assumed to transmit at
full power, i.e. $P_c$. While D2D MG users are secondary users, and interference creators for cellular transmissions. Thus, they are assumed to transmit at lower powers, so that their sum power does not overshoot
the threshold $P_G$.

\section{Power Allocation For Optimal Resource Allocation}
\label{sec:4}
The objective function in optimization problem \textbf{P} is a summation of $\vert \mathcal{K} \vert$ 
functions. Let $p_{g,k}^*$ denote the global maximal point where single function $EE_g^k$ achieves its maximum, and if the sum power
constraint \eqref{eq:1b} is removed, then $p_{g,k}$ are mutually independent. The $EE_g$ achieves the maximum value when every $EE_g^k$
achieves its maximum. Intuitively, finding power that 
maximizes the $EE_g^k$, is easier than solving \textbf{P} as a whole. The power $p_{g,k}^*$ that maximizes the individual $EE_g^k$ can be found using the following lemma.
\begin{pavikl}
\label{lemma:4}
The value $p_{g,k}^*$ that maximizes the $EE_g^k$ is
\[
p_{g,k}^*= 
\begin{cases}
   p_{g,k}^{\sup},& p_{g,k}^{\sup} \leq \left(\frac{2Z_k^g}{\alpha}\right)^{\alpha/2}\\
    \left(\frac{2Z_k^g}{\alpha}\right)^{\alpha/2},  & p_{g,k}^{\inf} \leq \left(\frac{2Z_k^g}{\alpha}\right)^{\alpha/2} \leq p_{g,k}^{\sup} \\
    p_{g,k}^{\inf},  & p_{g,k}^{\inf} \geq \left(\frac{2Z_k^g}{\alpha}\right)^{\alpha/2} \\
\end{cases}
\]
where $Z_k^g=    \chi_g^k \lambda_c^k \left(p_{c,k}\right)^{\frac{2}{\alpha}}$
\end{pavikl}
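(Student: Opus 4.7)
The plan is to treat $EE_g^k(p_{g,k})$ as a univariate function on the feasible interval $[p_{g,k}^{\inf},p_{g,k}^{\sup}]$, find its unconstrained stationary point in closed form, establish that the function is unimodal so that this stationary point is the unique global maximum, and then project onto the box constraint to produce the three cases.

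First, I would substitute the outage expression from Lemma~\ref{Lemma:1} into the throughput formula \eqref{eq:8} and then into \eqref{eq:9}. After pulling out all factors that do not depend on $p_{g,k}$ (treating $\log_2(1+\gamma_g^k)$ as a $p_{g,k}$-independent rate factor, consistent with the threshold-rate interpretation used throughout the paper), the per-channel energy efficiency reduces to the form
\begin{equation*}
EE_g^k(p_{g,k}) \;=\; \frac{A_k}{p_{g,k}}\,\exp\!\bigl(-Z_k^g\, p_{g,k}^{-2/\alpha}\bigr),
\end{equation*}
where $A_k$ collects the constants $|u_g|$, $\log_2(1+\gamma_g^k)$, and $\exp(-\chi_{g,k}\lambda_g^k)$, and $Z_k^g=\chi_g^k\lambda_c^k(p_{c,k})^{2/\alpha}$ is exactly the quantity named in the statement.

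Next I would take logarithms and differentiate. Writing $\ln EE_g^k = \ln A_k - \ln p_{g,k} - Z_k^g p_{g,k}^{-2/\alpha}$ gives
\begin{equation*}
\frac{d}{dp_{g,k}}\ln EE_g^k \;=\; \frac{1}{p_{g,k}}\!\left(\frac{2Z_k^g}{\alpha}\,p_{g,k}^{-2/\alpha} - 1\right).
\end{equation*}
Setting the bracket to zero yields the unique stationary point $\tilde p = (2Z_k^g/\alpha)^{\alpha/2}$, which is precisely the middle case in the lemma. Because $p_{g,k}^{-2/\alpha}$ is strictly decreasing in $p_{g,k}$, the bracketed factor is positive for $p_{g,k}<\tilde p$ and negative for $p_{g,k}>\tilde p$, so $EE_g^k$ is strictly increasing on $(0,\tilde p]$ and strictly decreasing on $[\tilde p,\infty)$. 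This unimodality is the key structural fact and eliminates the need for a second-derivative computation.

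Finally I would combine unimodality with the feasibility interval. If $\tilde p\ge p_{g,k}^{\sup}$, then $EE_g^k$ is increasing on the whole feasible interval, so the maximum lies at $p_{g,k}^{\sup}$; if $\tilde p\le p_{g,k}^{\inf}$, the function is decreasing there and the maximum is at $p_{g,k}^{\inf}$; otherwise $\tilde p\in[p_{g,k}^{\inf},p_{g,k}^{\sup}]$ is itself feasible and attains the optimum. These three cases exactly match the statement. The main subtlety, rather than an obstacle, is justifying that the $\log_2(1+\gamma_g^k)$ pre-factor can be held constant with respect to $p_{g,k}$ in $EE_g^k$; once that modeling step is made explicit, the remainder is standard calculus followed by projection onto the interval.
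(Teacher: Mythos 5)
Your proposal is correct and follows essentially the same route as the paper's Appendix C: identify the unique stationary point $\left(\frac{2Z_k^g}{\alpha}\right)^{\alpha/2}$ from the first-derivative sign change, conclude that $EE_g^k$ is increasing before it and decreasing after it, and then project onto $[p_{g,k}^{\inf},p_{g,k}^{\sup}]$ to obtain the three cases. Your only departures are cosmetic --- differentiating $\ln EE_g^k$ rather than $EE_g^k$ directly, and explicitly flagging the modeling step of holding $\log_2(1+\gamma_g^k)$ constant, which the paper does silently by absorbing it into $Y_k^g$.
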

\begin{proof}
Please refer to Appendix \ref{lemma_4_proof}
\end{proof}

Now, two cases arise, Case 1: when $\sum_{k=1}^{\vert \mathcal{K} \vert} p^{*}_{g,k} \leq P_G$

Then, the optimum power that is allocated is $p_{g,k}= p_{g,k}^*, \forall k = 1,\ldots,K$, and the maximum value of energy efficiency is $EE_g = \sum_{k=1}^{\vert \mathcal{K} \vert} EE_g^k(p_{g,k}^{*})$.

Case 2: when $\sum_{k=1}^{\vert \mathcal{K} \vert} p^{*}_{g,k} > P_G$.
Then, set $p_{g,k} = p_{g,k}^{*}, \forall k = 1,\ldots,K$, and update the value of $p_{g,k}$, such that power constraint \eqref{eq:1b} is maintained, while causing the least reduction in $EE_g$.
Let $p_{g,k}^{inst}$ be the instant value, having initial value of $p_{g,k}^{inst} = p_{g,k}^{*}$.
Let $\mathtt{d}$ denotes the difference between maximum available power and sum of assigned power. $\delta$ and $n$ denote the step size and number of steps, respectively, with $\delta = \mathtt{d} /n$. Parameter n controls the balance between computational effort and the performance, its value is assigned in accordance with convergence rate the error-tolerance requirements. As we are adjusting the power value which gives maximum value of the function, therefore, the function value decreases with reducing power, i.e, $EE_g^k\left(p_{g,k}^{inst} - \delta\right) < EE_g^k\left(p_{g,k}^{inst}\right)$.
To satisfy the equality constraint \eqref{eq:1b} while having the
least reduction in $EE_g$, we need to adjust that $p_{g,j},(j \in \mathcal{K}) $ for which $EE_g^j$ decreases the least after decreasing the instantaneous transmit power. 
\begin{align}
j &= \underset{\vert \mathcal{K} \vert} {\textrm{argmin}}~~~\vert EE_g^k\left(p_{g,k}^{inst}\right) - EE_g^k \left(p_{g,k}^{inst} - \delta\right) \vert \\
p_{g,j}^{inst}  &= p_{g,j}^{inst} - \delta
\end{align}
Iterating this process at least n times, leads to the sum power constraint \eqref{eq:1b} be met, and a near-to-optimal solution 
to \eqref{eq:14} is achieved. The formal description is given in Algorithm \rm{1}. The computation complexity of the proposed algorithm is $\mathcal{O}(n)$. 
\begin{algorithm}
\caption{The proposed power allocation algorithm}\label{EE_max_algo}
\DontPrintSemicolon
\KwIn{$K,n,P_G,P_C,\epsilon$, \textbf{Output:} $EE_g$}
\Begin($ $)
{
  $\textrm{Find} ~ p_{g,k}^* ~\textrm{from Lemma 3, and  assign}~ p_{g,k} =  p_{g,k}^*$\;
 $ \mathtt{d} =  P_G -  \sum_{k=1}^{K} p_{g,k} $ \;
 \uIf{$\mathtt{d} \leq 0$}{
 $EE_g = \sum_{k=1}^K EE_g^k\left(p_{g,k}^{*}\right)$
 }\Else{
  $\delta =  \mathtt{d} / n $\;
\While {$\vert P_G - \sum_{k=1}^K p_{g,k}\vert \geq \epsilon$}
{$j=\underset{\vert \mathcal{K} \vert} {\textrm{argmin}}\vert EE_g^k\left(p_{g,k}^{inst}\right) - EE_g^k\left(p_{g,k}^{inst} - \delta\right)\vert$ \;
\uIf{$ p_{g,j} + \delta > p_{g,j}^{\sup}\quad  \textrm{or} \quad p_{g,j} - \delta < p_{g,j}^{\inf}$}
{
$\vert\mathcal{K}\vert =\vert\mathcal{K}\vert\cap j$, \;
}
  \Else
  {
  $ p_{g,j} = p_{g,j} - \delta$, and
  $ \text{der}_j = \vert {EE_g^j}^{'} \left( p_{g,j}\right)\vert$
    }
 }} 
  \Return{$EE_g = \sum_{k=1}^K EE_g^k\left(p_{g,k}\right)$}
  }
\end{algorithm}
\section{Simulation Results}
\label{sec:5}
To explore the existence of trade-off between SE and EE for multiple D2D multicasts in underlay cellular networks and exploit it to design optimal resource allocation schemes, we carried out extensive numerical simulations. Some of the simulation parameters are as follows. The CU density $\lambda_c^k = \left[1e^{-4},1e^{-5},1e^{-4},1e^{-4},1e^{-4}\right]$ and D2D density $\lambda_g^k =\left[1e^{-3},1e^{-4},1e^{-3},1e^{-3},1e^{-3}\right]$ 
are considered, respectively. The CU density and D2D density are randomly chosen values. We used the following parameter values: $\alpha = 3, \vert u_g \vert =3$, $p_{g,k}^{\textrm{up}}= 15\textrm{dBm}$, $P_G = 25\textrm{dBm}$, $P_{c,k} = 26\textrm{dBm}$, $\Theta_c = 0.1$, and $\Theta_g=0.1$. The value of alpha has an impact on EE and SE, however, that is insignificant.

Fig. \ref{fig:2} depicts the behavior of EE with respect to required SE and the available D2D transmission power. It can be observed that 
for a given power level, with increase in SE requirement (1 to 10 bps/Hz), EE first increases then starts to decrease. This is because, 
for lower SE requirements, eNB tries to support many MGs per channel until outage thresholds are not violated. This leads to increase 
in sum rate and consequently, an increase in EE. While, for higher requirement of SE, eNB reduces the number of MG transmitters 
sharing a channel until outage probability constrains are not fulfilled. Therefore, sum rate decreases, consequently EE decreases.  
Similarly, increasing the transmission power of MG transmitter for fixed SE, higher data rate may be supported for small range D2D 
communication, therefore, the sum rate increases initially with power consumption. However, after some power threshold, MG 
transmitters start causing 
co-channel interference to CUs, and thus, CU outage probability starts increasing. To compensate this, eNB reduces the number of MGs 
to fulfil CU outage probability thresholds. Therefore, decrease in sum rate and consequently decrease in EE occurs.
 \begin{figure}
\centering
\includegraphics[width=3.5in]{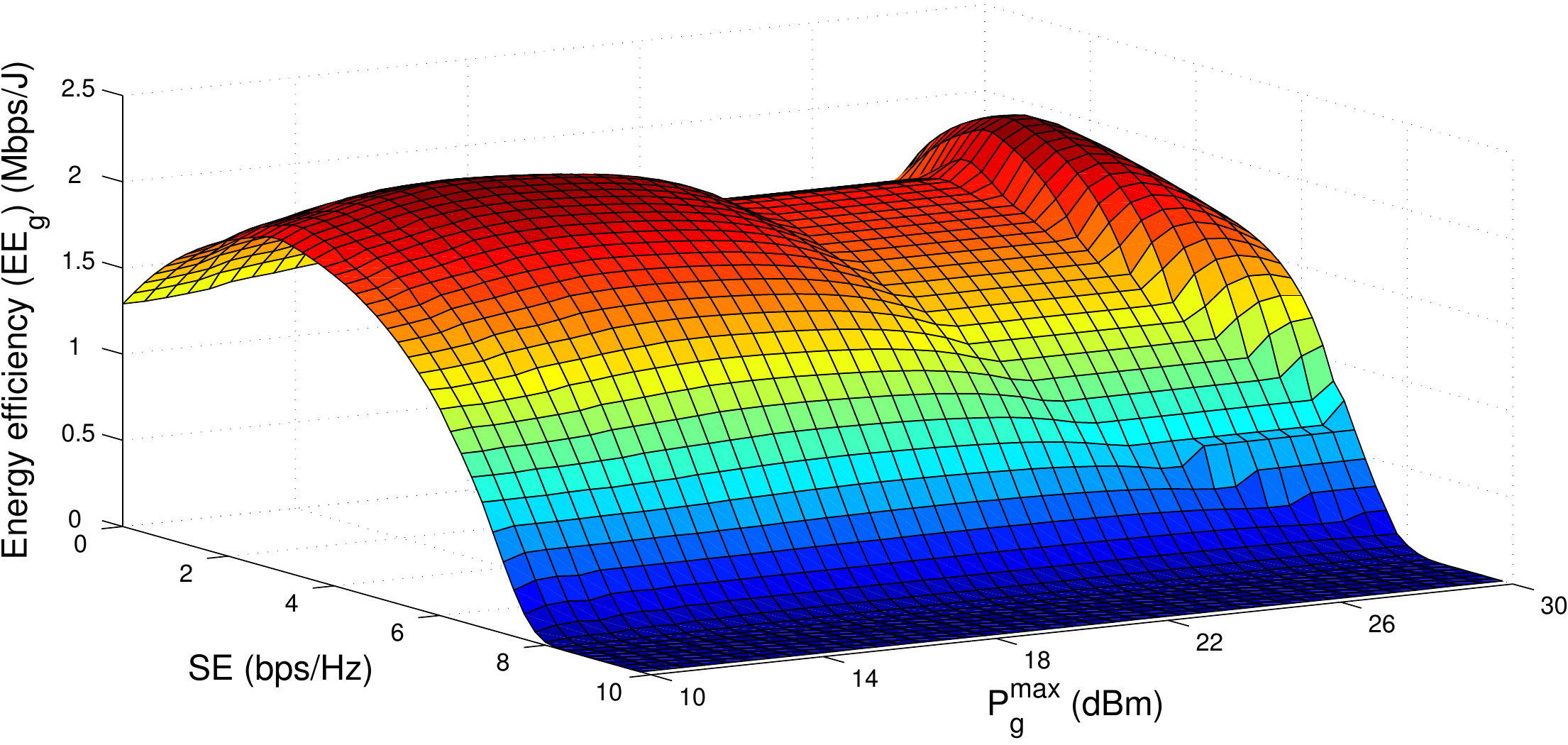}
\caption{EE as a function of SE and D2D users transmitter power}
\label{fig:2}
\end{figure} 
Fig. \ref{fig:3} depicts EE as a function of SE and D2D density. 
It can be observed that, for a given SE requirement, with increasing 
D2D density, EE first increases and then decreases. This is because, adding D2D users to the network (i.e increasing $\lambda_g$), 
results in an exponential increase in the average sum rate, and consequent increase in EE.  However, in high density, mutual 
interference starts increasing, 
and that limits the average sum rate per channel, leading to a decrease in EE. 
\begin{figure}
\centering
\includegraphics[width=3.5in]{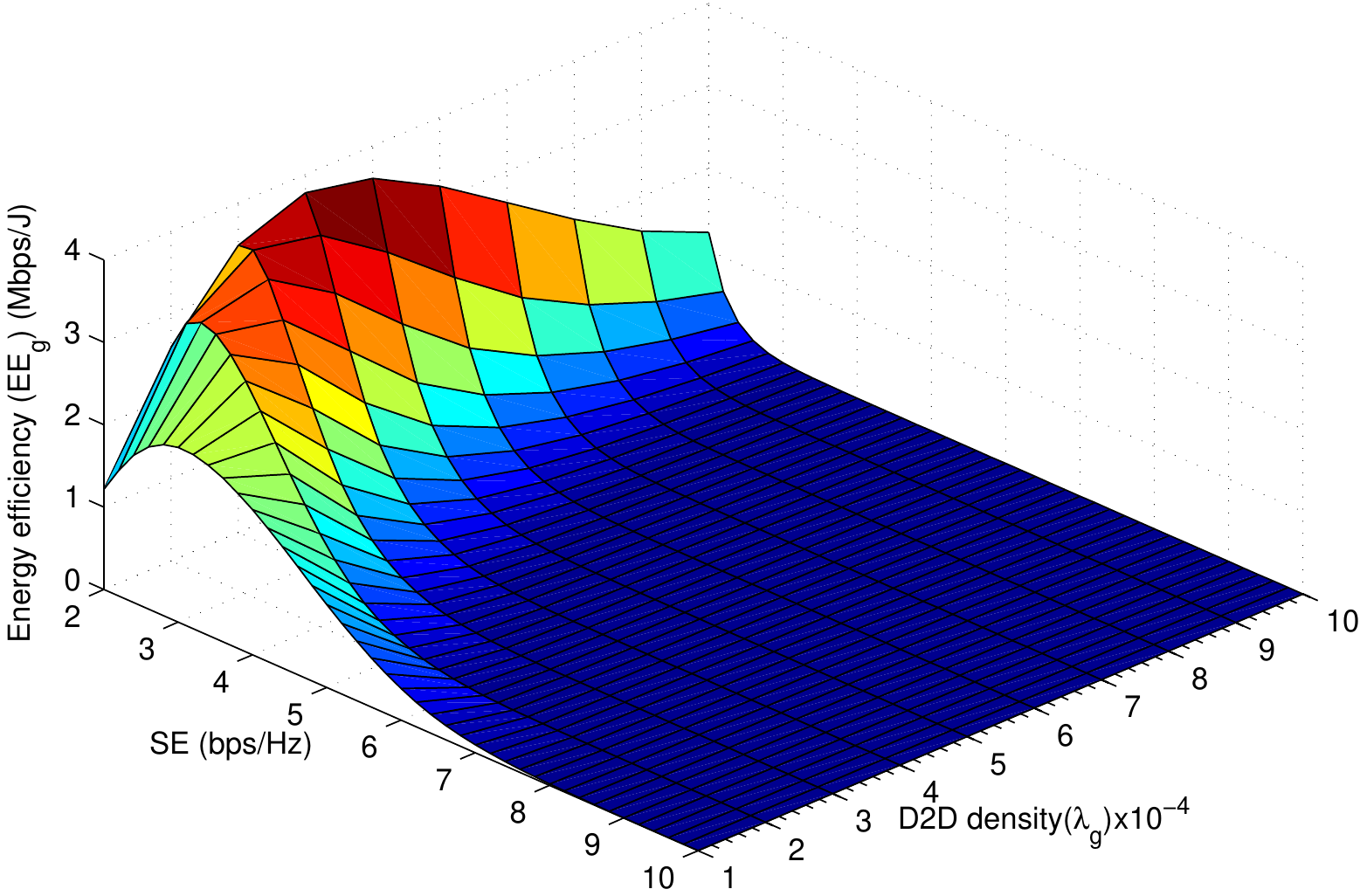}
\caption{EE as a function of SE and D2D users density}
\label{fig:3}
\end{figure}  
\section{Conclusion}
\label{sec:6}
For underlay D2D multicast in cellular networks, we addressed the energy efficiency and spectral efficiency trade-off. We assumed that multiple D2D-multicast group may share the channel with multiple CUs. Exact expression of average sum-rate and its relation with energy consumption is derived by utilizing the stochastic geometry. An energy optimization problem is formulated, having constraints on maximum power, and outage data rate. Our results showed that EE has different behavior with available power and spectral requirement. With increasing power, SE improves while EE initially increases then decreases. Similarly, with increase in D2D MG density, EE initially increases and then decreases. Indeed, for the EE, there is an optimal value of SE requirement that can be supported, which results in the maximal value of EE for each value of transmitter power of MG. 
\bibliographystyle{IEEEtran}

\appendices
\section{Proof of Lemma 1}
\label{lemma_1_proof}
\vspace{-0.1in}
\begin{align*}
&\textrm{Pr} \left(R_g^k < R_g^\textrm{th} \right) =  1 -  \textrm{Pr} \left( R_g^k \geq  R_g^\textrm{th} \right) \forall g \in \lambda_g^k \label{group_outage}\\
& = 1 - \text{Pr} \left(\gamma_g^k \geq 2^{ \frac{R_g^\textrm{th}} {\vert \mathcal{U}_g\vert}} -1 \right) =  1 - \text{Pr} \left(\frac{h_{g,r,k} d_{g,r}^{-\alpha}}{I_{c,r,k} + I_{g',g,k}}  \geq 2^ \frac{R_g^\textrm{th}}{\vert \mathcal{U}_g \vert} -1\right) \nonumber \\
& = 1-  \text{Pr} \left[ h_{g,r,k} \geq 2^ \frac{R_g^\textrm{th}}{\vert \mathcal{U}_g \vert - 1} d_{g,r}^{\alpha} \left(I_{c,r,k} + I_{g',g,k}\right) \right] \nonumber\\
 & =  1 - \Bigg \lbrace E \left(\mathop{\prod}_{j \in \Pi_{c,k}} \textrm{exp} \left( - \left(2^ \frac{R_g^\textrm{th}}{\vert \mathcal{U}_g \vert} - 1 \right) d_{g,r}^{\alpha} \frac{p_{c,k}}{p_{g,k}} h_{c,r,k} d_{c,g}^{-\alpha}\right)\right)  \nonumber \\
 & \times E \left( \mathop{\prod}_{g\in \Pi_{g,k}} \textrm{exp} \left(- \left(2^ \frac{R_g^\textrm{th}}{\vert \mathcal{U}_g \vert} - 1 \right)d_{g,r}^{\alpha} \frac{p_{g,k}}{p_{g',k}} h_{g',g,k} d_{g',g}^{-\alpha} \right)\right)\Bigg \rbrace\nonumber \\
 & = 1 -  \mathcal{L}_{I_{c,r,k} \left(h_{c,r,k}\right)}\left(T_{g,k}d_{g,r}^{\alpha}\right) \mathcal{L}_{I_{g',g,k} \left(h_{g',g,k}\right)} \left(T_{g,k} d_{g,r}^{\alpha}\right),
\end{align*}
where $T_{g,k} = \left(2^ \frac{R_g^\textrm{th}}{\vert \mathcal{U}_g \vert} - 1 \right)$.
According to definition of Laplace transform, we have 
\begin{align*}
&\mathcal{L}_{I_{c,r,k}\left(h_{c,r,k}\right)}\left(T_{g,k} d_{g,r}^{\alpha}\right) = \text{exp} \left[ - \lambda_g^k \left(\frac{p_{c,k}}{p_{g,k}}\right)^{\frac{2}{\alpha}} \pi T_{g,k}^{\frac{2}{\alpha}}d_{g,r}^2 \Gamma \left(1  + \frac{2}{\alpha}\right) \Gamma \left(1 - \frac{2}{\alpha}\right)\right]
\end{align*}
\begin{align*}
&\mathcal{L}_{I_{g',g,k}} \left(h_{g',g,k}\right)\left(T_{g,k} d_{g,r}^{\alpha}\right) = \text{exp} \left[ -\lambda_g^k \int_{0}^{\infty} E\left( h_{g',g,k} \right)\left(1 -  e^{-T_{g,k} d_{g,r}^{\alpha} r^{-\alpha}}\right) dr \right] \nonumber\\
&= \text{exp} \left[ -\lambda_g^k \left(\frac{p_{g,k}}{p_{g',k}}\right)^{\frac{2}{\alpha}} \pi T_{g,k}^{\frac{2}{\alpha}} d_{g,r}^2 \Gamma \left( 1 + \frac{2}{\alpha}\right) \Gamma \left(1 - \frac{2}{\alpha}\right)\right] 
\end{align*}

\section{Proof that the formulated problem $\mathbf{P}$ is non-convex}
\label{proof_of_convex}
Let $Y_k^g = \vert u_g \vert \textrm{log}_2 \left(1 + \gamma_g^k\right) \textrm{exp}\left( - \chi_g^k \lambda_g^k\right)$ and $Z_k^g =  \chi_g^k \lambda_c^k \left(p_{c,k}\right)^{\frac{2}{\alpha}}$, then we can write $EE_g^k$ as 
\begin{equation*}
EE_g^k \triangleq ({Y_k^g}/{p_{g,k}}) \exp\left( -  Z_k^g \left(\frac{1}{p_{g,k}}\right)^{\frac{2}{\alpha}}\right)
\end{equation*}

Taking double derivative of $EE_g^k$ with respect to $p_{g,k}$. 
\begin{align*}
\frac{d^2 \left(EE_g^k\right)}{d p_{g,k}^2} &=  2 Y_k^g \exp \left( - Z_k^g \left(\frac{1}{p_{g,k}}\right)\right) ^{2/\alpha}\left(\frac{1}{p_{g,k}} \right)^{3 + 4 /\alpha} \nonumber \\
& \times \left( p_{g,k}^{4 /\alpha} - \frac{Z_k^g}{\alpha} \left(\frac{2}{\alpha} + 3\right) p_{g,k}^{2 / \alpha}   + \frac{2 {\left(Z_k^g\right)}^2}{\alpha^2}\right) 
\end{align*}
In this equation, the 1st and 2nd terms are greater than zero, therefore, we only focus on the last term. 
Let $\nu  = p_{g,k}^{2/ \alpha}$, then the last term can be written as 
\begin{equation*}
\mathcal{V} =  \nu^2  - \frac{Z_k^g}{\alpha} \left(\frac{2}{\alpha} + 3\right) \nu + \frac{2 {\left(Z_k^g\right)}^2}{\alpha^2} 
\end{equation*}
The solutions $\nu_{1,k}$ and $\nu_{2,k}$ can be found as 
\begin{equation*}
\nu_{i,k} =  ({Z_k^g}/{2 \alpha^2}) \left(2 + 3\alpha \mp \sqrt{\alpha^2 + 12 \alpha + 4}\right), i \in \lbrace 1,2\rbrace
\end{equation*}
$\mathcal{V}$ is positive on the interval $\left(0,\nu_{1,k}\right) \cup \left(\nu_2, +\infty\right)$, and negative in interval $\left(\nu_{1,k}, \nu_{2,k}\right)$. Therefore,  $d^2 {\left(EE_g^k\right)}/{d p_{g,k}^2}$ is positive in interval $\left(0,\nu_{1,k}^{\alpha / 2}\right) \cup \left(\nu_2^{\alpha / 2}, +\infty\right),$
and negative in interval $\left(\nu_{1,k}^{\alpha / 2}, \nu_{2,k}^{\alpha / 2}\right)$. 
Therefore, $EE_g^k$ is convex on interval $\left(0, \nu_{1,k}^{\frac{\alpha}{2}}\right) \cup \left(\nu_{2,k}^{\frac{\alpha}{2}}, + \infty \right)$, but concave on the interval $\left(\nu_{1,k}^{\frac{\alpha}{2}},\nu_{2,k}^{\frac{\alpha}{2}}\right)$ 

\section{Proof of Lemma 3}
\label{lemma_4_proof}
Take the derivative of $EE_g^k$ with respect to $p_{g,k}$, 
\begin{equation*}
\frac{d \left(EE_g^k\right)}{d p_{g,k}} =  \frac{Y_k^g}{p_{g,k}^2} \exp \left(  - Z_k^g \left(\frac{1}{p_{g,k}^{2/\alpha}}\right)\right) \times \left( \frac{2 Z_k^g}{\alpha}\left(\frac{1}{p_{g,k}^{2/\alpha}}\right) - 1 \right)
\end{equation*}
From this equation, it can be observed that derivative of $EE_g^k$ is positive if $p_{g,k}$ lies in interval $\left(0, \left(\frac{2 Z_k^g}{\alpha}\right)^{\alpha/2} \right)$, and negative if $p_{g,k}$ lies in interval $\left(\left(\frac{2 Z_k^g}{\alpha}\right)^{\alpha/2}, + \infty\right)$, and reaches the global maxima at $p_{g,k} = \left(\frac{2 Z_k^g}{\alpha}\right)^{\alpha/2}$. Therefore, three feasible region exist, $EE_g^k$ is increasing monotonically in  $\left(0, \left(\frac{2 Z_k^g}{\alpha}\right)^{\alpha/2} \right)$,  decreasing monotonically in $\left(\left(\frac{2 Z_k^g}{\alpha}\right)^{\alpha/2}, + \infty\right)$, and having maximal point at  $p_{g,k} = \left(\frac{2 Z_k^g}{\alpha}\right)^{\alpha/2}$. If $p_{g,k}^{\sup} \leq\left(\frac{2 Z_k^g}{\alpha}\right)^{\alpha/2}$, then $EE_g^k$ is an increasing function, and, it reaches the maximum value when $p_{g,k}^{*} =  p_{g,k}^{\sup}$. Second region is $ p_{g,k}^{\inf} \leq \left(\frac{2 Z_k^g}{\alpha}\right)^{\alpha/2} \leq p_{g,k}^{\sup}$, then the maximal point of $EE_g^k$ is within feasible region. Therefore, optimal point is $p_{g,k}^* =   \left(\frac{2 Z_k^g}{\alpha}\right)^{\alpha/2}$. Third region is, if $ p_{g,k}^{\inf} \geq \left(\frac{2 Z_k^g}{\alpha}\right)^{\alpha/2}$, then $EE_g^k$ decreases monotonically in the feasible region. Hence, the maximum value of $EE_{g,k}$ is achieved at $p_{g,k}^*  = p_{g,k}^{\inf}$   
\end{document}